\newtheorem{lemma}{Lemma}
\newtheorem{theorem}{Theorem}
\newtheorem{remark}{Remark}
\newtheorem{definition}{Definition}
\newtheorem{assumption}{Assumption}
\newtheorem{problem}{Problem}
\newenvironment{proof}{{\noindent\it \textbf{Proof}}\quad}{\hfill $\square$\par}
\title{On the Performance Analysis of Binary Hypothesis Testing with Byzantine Sensors}
 \author{Yuqing Ni$^{1}$, Kemi Ding$^{2}$, Yong Yang$^{3}$, Ling Shi$^{1}$}
 \affiliation{1. Department of Electronic and Computer Engineering, Hong Kong University of Science and Technology, Hong Kong\email{yniac@connect.ust.hk, eesling@ust.hk}}
 \affiliation{2. School of Electrical, Computer and Energy Engineering, Arizona State University, United States of America\email{kding11@asu.edu}}
 \affiliation{3. School of Mechatronic Engineering, Guangdong Polytechnic Normal University, China\email{yy2008@gpnu.edu.cn}}
\DeclareMathOperator{\Tr}{Tr}
\begin{document} \maketitle
\begin{abstract}
We investigate the impact of Byzantine attacks in distributed detection under binary hypothesis testing. It is assumed that a fraction of the transmitted sensor measurements are compromised by the injected data from a Byzantine attacker, whose purpose is to confuse the decision maker at the fusion center. From the perspective of a Byzantine attacker, under the injection energy constraint, an optimization problem is formulated to maximize the asymptotic missed detection error probability, which is based on the Kullback-Leibler divergence. The properties of the optimal attack strategy are analyzed by convex optimization and parametric optimization methods. Based on the derived theoretic results, a coordinate descent algorithm is proposed to search the optimal attack solution. Simulation examples are provided to illustrate the effectiveness of the obtained attack strategy.
\end{abstract}

\keywords{Hypothesis testing, Byzantine attacks, Network security.}

\footnotetext{The work by Y. Ni and L. Shi is supported by a Hong Kong RGC General Research Fund 16208517.}
	
\section{Introduction}
Wireless sensor networks (WSNs) deploy a large number of sensors to monitor their environment and transmit their measurements to a remote fusion center over wireless communication links. They have been extensively applied in health care monitoring, environmental sensing and industrial monitoring. Based on these received measurements, the fusion center makes a decision about the presence or absence of the phenomenon of interest. Distributed detection at the fusion center has been well studied in detection theory literature~\cite{varshney2012distributed,viswanathan1997distributed}. 

However, these sensors are vulnerable to malicious attacks due to their own limited capabilities and the distributed nature of WSNs. One typical attack type is Byzantine attack. According to~\cite{vempaty2013distributed}, Byzantine attack refers to tampering or falsifying the transmitted data by some internal adversary who has the knowledge about the WSNs. The purpose of the Byzantine attackers is to confuse the fusion center and let the fusion center make an incorrect decision about the state of nature. Distributed detection in the presence of Byzantine attacks has been widely studied in state-of-the-art works. Marano \textit{et al}.~\cite{marano2009distributed} considered the distributed detection under the Neyman-Pearson setup, where a fraction of the sensors were compromised by a Byzantine attacker. An optimal attack strategy to minimize the detection error exponent, which is based on the Kullback-Leibler divergence, was obtained by using a ``water-filling'' procedure. Rawat \textit{et al}.~\cite{rawat2011collaborative} analyzed the performance limits of collaborative spectrum sensing with the presence of Byzantine attackers, who did not know the true state of nature. Optimal strategies for the Byzantine attackers and the fusion center were derived under a minimax game framework. Kailkhura \textit{et al}.~\cite{kailkhura2015asymptotic} adopted Chernoff information as the performance metric and obtained closed-form expressions for the optimal attack strategies which degraded the detection performance most in the asymptotic regime. 

All the works discussed so far for distributed detection under Byzantine attacks consider scenarios where the values of transmitted measurements can only be chosen from a discrete finite alphabet, i.e., $\{0,1\}$. We consider a more general case where the measurement can be any real number. Furthermore, a constraint for the attack power is taken into consideration in our work. We are interested in analytically characterizing the impact of the malicious data injected by a Byzantine attacker. Specifically, from the Byzantine attacker's perspective, what is the most effective attack strategy under limited injection power? 

In this work, we adopt a standard model in distributed detection under binary hypotheses $\mathcal{H}_0$ versus $\mathcal{H}_1$ with known Gaussian distributions. Measurements are independently and identically distributed conditioned on the unknown hypothesis. We assume that the Byzantine attacker knows the true state of nature and they inject independent Gaussian noises to a fraction of the measurements based on this knowledge. The fusion center makes the detection under the Neyman-Pearson setup.

The remainder of this paper is organized as follows: Section~\ref{sec:formulation} introduces the Byzantine attack model and the problem of interest. Section~\ref{sec:pre} provides some preliminaries about the approximation methods of the KL divergence between Gaussian mixture models. Section~\ref{sec:main} presents the main theoretic results regarding the optimal attack strategy and proposes an algorithm to search the optimal solution. Section~\ref{sec:numerical} shows simulation examples and gives interpretations. Section~\ref{sec:conclusion} draws conclusions.

\emph{Notations}: $\mathbb{R}$ denotes the set of real numbers. $\mathbb{R}^n$ is the $n$-dimensional Euclidean space. $\mathbb{S}_{+}^{n}$ ($\mathbb{S}_{++}^{n}$) is the set of $n\times n$ positive semi-definite (definite) matrices. When $X\in \mathbb{S}_{+}^{n}$ ($\mathbb{S}_{++}^{n}$) , we simply write $X\succeq0$ ($X\succ0$). $\displaystyle \mathscr{N}\left(\mu,~\Sigma\right)$ denotes a Gaussian distribution with mean $\mu$ and variance $\Sigma$. The notation $\sim$ is read as ``is distributed according to''. $\Tr(\cdot)$ stands for the trace of a matrix. $\Vert\cdot\Vert$ and the superscript $\left(\cdot\right)^\top$ denote the Euclidean norm and the transpose of a vector, respectively.
	
\section{Problem Formulation}\label{sec:formulation}
Consider a binary state detection problem, where $\theta\in\{0,~1\}$, using $m$ sensors' measurements. Define the measurement from sensor $j$ as $x_j\in\mathbb{R}^n$. Given the state $\theta$, we assume that all measurements $\{x_j\}_{j=1,2,\dots,m}$ are independently and identically distributed (i.i.d.). When the state $\theta=0$, the probability measure generated by $x_j$ is $f_0$ and when $\theta=1$, it is denoted as $f_1$. We assume that the probability measures $f_0$ and $f_1$ are Gaussian distributions under two hypotheses $\mathcal{H}_0$ and $\mathcal{H}_1$: \begin{align*}
\mathcal{H}_0:~~&f_0\sim \mathscr{N}\left(\mu_0,~\Sigma_0\right),\\
\mathcal{H}_1:~~&f_1\sim \mathscr{N}\left(\mu_1,~\Sigma_1\right),
\end{align*}
where $\Sigma_0, \Sigma_1\succ0$.

\subsection{Byzantine attack model}
Denote the manipulated measurements at sensor $j$ as
\begin{align*}
x^\star_j=x_j+x_j^a,
\end{align*}
where $x_j^a\in \mathbb{R}^n$ is the bias vector injected by the attacker obeying Gaussian distributions under two hypotheses:
\begin{align*}
\mathcal{H}_0:~~&f_0^a\sim \mathscr{N}\left(\nu_0-\mu_0,~\Gamma_0-\Sigma_0\right),\\
\mathcal{H}_1:~~&f_1^a\sim \mathscr{N}\left(\nu_1-\mu_1,~\Gamma_1-\Sigma_1\right).
\end{align*}  
Assume that the injected bias $x_j^a$ is independent of the original measurement $x_j$. Furthermore, $\Gamma_0\succeq\Sigma_0$ and $\Gamma_1\succeq\Sigma_1$. Correspondingly, the manipulated measurement $x_j^\star$ is also Gaussian distributed. Its probability measures under two hypotheses $\mathcal{H}_0$ and $\mathcal{H}_1$ are given by
\begin{align*}
\mathcal{H}_0:~~&g_0\sim \mathscr{N}\left(\nu_0,~\Gamma_0\right),\\
\mathcal{H}_1:~~&g_1\sim \mathscr{N}\left(\nu_1,~\Gamma_1\right).
\end{align*}
The following assumption is made on the attacker.
\begin{assumption}
(Model Knowledge): The attacker knows the probability measures $f_0$ and $f_1$ and the true state $\theta$.
\end{assumption}
Generally, this is a common assumption regarding the worst-case attacks, which is also included in~\cite{ren2018binary,ren2018secure,marano2009distributed,fellouris2018efficient}. Moreover, this assumption is in accordance with the Shannon's maxim, that is the defensive systems should be designed under the assumption that the enemy will immediately gain full knowledge of the systems. Therefore, the probability measures $f_0$ and $f_1$ can be developed by the attacker. The true state can be obtained by deploying attacker's own sensor network. Based on the model knowledge, the attacker is capable of well designing the injected vectors to confuse the fusion center.
Let the parameter $\alpha\in\left(0,1\right)$ represent the \textit{attacking power} of the adversary. We assume that the $m$ measurements received at the fusion center are manipulated by the attacker with probability $\alpha$. Therefore, the $j$-th sample at the fusion center is distributed as follows:
\begin{align*}
\mathcal{H}_0:~~&\left(1-\alpha\right)f_0+\alpha g_0,\\
\mathcal{H}_1:~~&\left(1-\alpha\right)f_1+\alpha g_1.
\end{align*}
Note that all of these $m$ measurements are conditional i.i.d.. 

\subsection{Problem of interest}
The attacker aims at devastating the detection performance at the fusion center. Similar to~\cite{marano2009distributed} and~\cite{coutino2018submodular}, we quantify the impact of Byzantine attacks by Kullback-Leibler (KL) divergence, which measures the ``distance'' between the hypotheses under test. The KL divergence $\displaystyle \mathcal{D}\left(\left(1-\alpha\right)f_0+\alpha g_0~\Vert~\left(1-\alpha\right)f_1+\alpha g_1\right)$ determines the missed detection error probability under the Neyman-Pearson setup by Stein's lemma~\cite{cover2012elements}. A smaller KL divergence implies a larger missed detection error probability at the fusion center. The attacker should choose $f_0^a$ and $f_1^a$ wisely to minimize the KL divergence under an injection energy constraint. We consider the following optimization problem from the perspective of the Byzantine attacker:
\begin{problem}
	\label{prob:original}
\begin{align*}
\min_{f_0^a,f_1^a,\alpha}~&\mathcal{D}\left(\left(1-\alpha\right)f_0+\alpha g_0~\Vert~\left(1-\alpha\right)f_1+\alpha g_1\right),\\
\rm s.t.~~~~&0<\alpha<1,~~\Gamma_0\succeq\Sigma_0,~~\Gamma_1\succeq\Sigma_1,\\
&\alpha\left[\right.\Tr\left(\Gamma_0+\Gamma_1-\Sigma_0-\Sigma_1\right)\\
&~~~+\Vert\nu_0-\mu_0\Vert^2+\Vert\nu_1-\mu_1\Vert^2\left.\right]\leq \delta,
\end{align*}
\end{problem}
where $\delta$ is a given positive constant, denoting the degree of difficulty for the Byzantine attack. A larger $\delta$ allows more energy to inject, which avails the attacker of more opportunities to launch the Byzantine attack.

\section{Preliminary: KL Divergence Approximation between Gaussian Mixture Models}\label{sec:pre}
In this section, we introduce several methods to approximate the KL divergence between two Gaussian mixtures, which is a key supporting technique to deal with the objective in Problem~\ref{prob:original}, since there is no accurate closed-form expression.

\subsection{Monte Carlo sampling}\label{Monte_Carlo}
For large dimension $n$, Monte Carlo simulation is the only method that can estimate $\displaystyle \mathcal{D}\left(\left(1-\alpha\right)f_0+\alpha g_0~\Vert~\left(1-\alpha\right)f_1+\alpha g_1\right)$ with arbitrary accuracy. We can draw i.i.d. samples $\{z_i\}$ from the probability density function $\left(1-\alpha\right)f_0+\alpha g_0$, and we have~\cite{hershey2007approximating}:
\begin{align*}
\lim_{K\to\infty}\frac{1}{K}\sum_{i=1}^{K}&\log{\frac{\left[\left(1-\alpha\right)f_0+\alpha g_0\right](z_i)}{\left[\left(1-\alpha\right)f_1+\alpha g_1\right](z_i)}}\\
&\to \mathcal{D}\left(\left(1-\alpha\right)f_0+\alpha g_0~\Vert~\left(1-\alpha\right)f_1+\alpha g_1\right).
\end{align*}

\subsection{Upper bound approximation}
By the chain rule for relative entropy~\cite{cover2012elements}, the upper bound of the KL divergence can be given by:
\begin{align*}
\mathcal{D}&\left(\left(1-\alpha\right)f_0+\alpha g_0~\Vert~\left(1-\alpha\right)f_1+\alpha g_1\right)\\
&~~~~~~~~~~~~~~~~~~~~~~\leq\left(1-\alpha\right)\mathcal{D}\left(f_0~\Vert~f_1\right)+\alpha\mathcal{D}\left(g_0~\Vert~g_1\right).
\end{align*}

\subsection{Gaussian approximation}
A common method is to replace the Gaussian mixtures with modified Gaussian distributions~\cite{hershey2007approximating}. Denote the Gaussian approximations as $y_{a_0}$ and $y_{a_1}$:
\begin{small}
	\begin{align*}
	\mathcal{H}_0:~~y_{a_0}\sim &\mathscr{N}\left(\right.\left(1-\alpha\right)\mu_0+\alpha\nu_0,~\left(1-\alpha\right)\Sigma_0+\alpha\Gamma_0\\
	&~~~~+\alpha\left(1-\alpha\right)\left(\mu_0-\nu_0\right)\left(\mu_0-\nu_0\right)^\top\left.\right),\\
	\mathcal{H}_1:~~y_{a_1}\sim &\mathscr{N}\left(\right.\left(1-\alpha\right)\mu_1+\alpha\nu_1,~\left(1-\alpha\right)\Sigma_1+\alpha\Gamma_1\\
	&~~~~+\alpha\left(1-\alpha\right)\left(\mu_1-\nu_1\right)\left(\mu_1-\nu_1\right)^\top\left.\right).
	\end{align*}
\end{small}
Based on this Gaussian approximation method, the KL divergence between two Gaussian mixture models then can be expressed in a closed form~\cite{duchi2007derivations}.

The above three approximations have their own features. The Monte Carlo sampling performs much better in accuracy, especially for high-dimension cases. The upper bound approximation is more concise, but somewhat loose. The Gaussian approximation is a closed-form expression and probably, it tends to be followed by more theoretic analysis. In the following sections, we mainly focus on the Gaussian approximation and derive some theoretic results.    

\section{Main Results}\label{sec:main}
Due to the complexity of Problem~\ref{prob:original}, in this paper, we only consider the scalar case $n=1$, aiming to get some inspiring insights. By the Gaussian approximation, the KL divergence objective is then transformed into:
\begin{small}
\begin{align*}
\mathcal{D}\left(y_{a_0}\| y_{a_1}\right)&=\frac{1}{2}\bigg{[}\frac{\left(1-\alpha\right)\Sigma_0+\alpha\Gamma_0+\alpha\left(1-\alpha\right)\left(\mu_0-\nu_0\right)^2}{\left(1-\alpha\right)\Sigma_1+\alpha\Gamma_1+\alpha\left(1-\alpha\right)\left(\mu_1-\nu_1\right)^2}\\
&+\frac{\left[\left(1-\alpha\right)\mu_1+\alpha\nu_1-\left(1-\alpha\right)\mu_0-\alpha\nu_0\right]^2}{\left(1-\alpha\right)\Sigma_1+\alpha\Gamma_1+\alpha\left(1-\alpha\right)\left(\mu_1-\nu_1\right)^2}-1\\
&-\ln\frac{\left(1-\alpha\right)\Sigma_0+\alpha\Gamma_0+\alpha\left(1-\alpha\right)\left(\mu_0-\nu_0\right)^2}{\left(1-\alpha\right)\Sigma_1+\alpha\Gamma_1+\alpha\left(1-\alpha\right)\left(\mu_1-\nu_1\right)^2}\bigg{]}.
\end{align*}
\end{small}
The problem is complex with all the decision variables $\nu_0$, $\nu_1$, $\Gamma_0$, $\Gamma_1$, and $\alpha$. To deal with this challenging situation, we mildly simplify it by fixing variables $\nu_0$, $\nu_1$, and $\alpha$ first, and we show that it can be transformed into a convex optimization by change of variables with respect to Gaussian variances $\Gamma_0$ and $\Gamma_1$. Second, we reduce the solution space to a search space only depending upon the Gaussian means $\nu_0$ and $\nu_1$, and the \textit{attacking power} $\alpha$. By proving that the new objective is continuous at the above three variables, we reveal the special characteristics of the optimal attack solution. Finally, a coordinate descent algorithm is proposed to search the optimal Byzantine attack policy.
 
\subsection{Results regarding $\Gamma_0$ and $\Gamma_1$}
In this subsection, we fix the Gaussian means $\nu_0$ and $\nu_1$, and the \textit{attacking power} $\alpha$. For notational convenience, we define the following constants:
\begin{small}
\begin{align*}
c_0 &\triangleq \frac{\left(1-\alpha\right)\Sigma_0+\alpha\left(1-\alpha\right)\left(\mu_0-\nu_0\right)^2}{\alpha}>0,\\
c_1 &\triangleq \frac{\left(1-\alpha\right)\Sigma_1+\alpha\left(1-\alpha\right)\left(\mu_1-\nu_1\right)^2}{\alpha}>0,\\
c_2 &\triangleq \frac{\left[\left(1-\alpha\right)\mu_1+\alpha\nu_1-\left(1-\alpha\right)\mu_0-\alpha\nu_0\right]^2}{\alpha}\geq 0.
\end{align*}
\end{small}
The Byzantine attack optimization problem is then transformed into
\begin{problem}
	\label{prob:gaussian_approx_scalar}
	\begin{align*}
	\min_{\Gamma_0,\Gamma_1}~~&\frac{1}{2}\left(\frac{\Gamma_0+c_0}{\Gamma_1+c_1}+\frac{c_2}{\Gamma_1+c_1}-1-\ln\frac{\Gamma_0+c_0}{\Gamma_1+c_1}\right),\\
	\rm s.t.~~~&\alpha\left[\right.\Gamma_0+\Gamma_1-\Sigma_0-\Sigma_1\\
	&~~~+\left(\nu_0-\mu_0\right)^2+\left(\nu_1-\mu_1\right)^2\left.\right]\leq \delta,\\
	&\Gamma_0\geq \Sigma_0,~~\Gamma_1\geq \Sigma_1.
\end{align*}
\end{problem}
To make the problem feasible, we further assume that the given variables satisfy 
\begin{align*}
\delta\geq \alpha\left[\left(\nu_0-\mu_0\right)^2+\left(\nu_1-\mu_1\right)^2\right],~~0<\alpha<1.
\end{align*}
We propose another attack optimization Problem~\ref{prob:gaussian_approx_scalar_change} and give the following Theorem~\ref{thm:equivalent}.
\begin{problem}
	\label{prob:gaussian_approx_scalar_change}
	\begin{align*}
	\min_{\widetilde{\Gamma}_0,\widetilde{\Gamma}_1}~~&\frac{1}{2}\left(\widetilde{\Gamma}_0-\ln\widetilde{\Gamma}_0+c_2\widetilde{\Gamma}_1-1\right),\\
	\rm s.t.~~~&\widetilde{\Gamma}_0\geq\left(\Sigma_0+c_0\right)\widetilde{\Gamma}_1,\\
	&\widetilde{\Gamma}_0\leq\bigg{[}\frac{\delta}{\alpha}-\left(\nu_0-\mu_0\right)^2-\left(\nu_1-\mu_1\right)^2+\Sigma_0+\Sigma_1\\
	&~~~~~~~+c_0+c_1\bigg{]}\widetilde{\Gamma}_1-1,\\
	&\widetilde{\Gamma}_1\leq\frac{1}{\Sigma_1+c_1}.
	\end{align*}
\end{problem}	

\begin{theorem}
	\label{thm:equivalent}
	Problem~\ref{prob:gaussian_approx_scalar} is equivalent to Problem~\ref{prob:gaussian_approx_scalar_change}, which is a convex optimization problem.
\end{theorem}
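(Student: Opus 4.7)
The plan is to exhibit an explicit change of variables that turns Problem~\ref{prob:gaussian_approx_scalar} into Problem~\ref{prob:gaussian_approx_scalar_change}, and then to observe that the latter is manifestly convex. The natural candidate is
\[
\widetilde{\Gamma}_0 \;:=\; \frac{\Gamma_0+c_0}{\Gamma_1+c_1}, \qquad \widetilde{\Gamma}_1 \;:=\; \frac{1}{\Gamma_1+c_1}.
\]
This map is well defined because $\Gamma_1 \ge \Sigma_1 > 0$ and $c_1 > 0$ force $\Gamma_1 + c_1 > 0$, and its inverse $\Gamma_1 = 1/\widetilde{\Gamma}_1 - c_1$, $\Gamma_0 = \widetilde{\Gamma}_0/\widetilde{\Gamma}_1 - c_0$ is equally well defined on $\{\widetilde{\Gamma}_1 > 0,\ \widetilde{\Gamma}_0 > 0\}$, giving a bijection between the relevant feasible sets.

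I would then carry out two substitutions in turn. First, plugging the new variables into the objective of Problem~\ref{prob:gaussian_approx_scalar}, the three terms $\frac{\Gamma_0+c_0}{\Gamma_1+c_1}$, $\frac{c_2}{\Gamma_1+c_1}$, and $\ln\frac{\Gamma_0+c_0}{\Gamma_1+c_1}$ collapse directly to $\widetilde{\Gamma}_0$, $c_2\widetilde{\Gamma}_1$, and $\ln\widetilde{\Gamma}_0$, reproducing the objective of Problem~\ref{prob:gaussian_approx_scalar_change}. Second, I would translate each original constraint: $\Gamma_0 \ge \Sigma_0$ becomes $\widetilde{\Gamma}_0 \ge (\Sigma_0+c_0)\widetilde{\Gamma}_1$ by adding $c_0$ to both sides and dividing by $\Gamma_1+c_1>0$; $\Gamma_1 \ge \Sigma_1$ becomes $\widetilde{\Gamma}_1 \le 1/(\Sigma_1+c_1)$ by monotonicity of reciprocation on positive reals; and the energy constraint $\alpha[\Gamma_0+\Gamma_1-\Sigma_0-\Sigma_1+(\nu_0-\mu_0)^2+(\nu_1-\mu_1)^2]\le\delta$ becomes, after rearranging into an upper bound on $\Gamma_0+\Gamma_1$, shifting by $c_0+c_1$, and dividing by $\Gamma_1+c_1$, exactly the linear inequality $\widetilde{\Gamma}_0 \le [\delta/\alpha -(\nu_0-\mu_0)^2 -(\nu_1-\mu_1)^2+\Sigma_0+\Sigma_1+c_0+c_1]\widetilde{\Gamma}_1 - 1$ written in the statement. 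Because the transformation is a bijection that preserves both the objective value and the feasibility, optima correspond one-to-one and the two problems are equivalent.

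Convexity of Problem~\ref{prob:gaussian_approx_scalar_change} is then essentially immediate. The objective is the sum of a linear function of $(\widetilde{\Gamma}_0,\widetilde{\Gamma}_1)$ and the univariate convex function $-\tfrac{1}{2}\ln\widetilde{\Gamma}_0$ on $\widetilde{\Gamma}_0>0$, hence convex; and all three constraints are affine in $(\widetilde{\Gamma}_0,\widetilde{\Gamma}_1)$, so the feasible set is polyhedral. The main bookkeeping obstacle, as I see it, is the energy constraint: the rearrangement into a linear inequality in the new variables requires combining the bound on $\Gamma_0+\Gamma_1$ with a carefully chosen shift by $c_0+c_1$ and dividing through by a strictly positive quantity, and it is at this step that one has to verify that no sign is inadvertently flipped and that the implicit positivity $\widetilde{\Gamma}_1>0$ (forced by that same constraint together with $\widetilde{\Gamma}_0 > 0$) is consistent with the inversion formulas. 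Once that algebra is in place, equivalence and convexity follow at once.
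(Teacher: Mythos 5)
Your proposal is correct and follows exactly the paper's route: the paper's proof uses the identical change of variables $\widetilde{\Gamma}_0=(\Gamma_0+c_0)/(\Gamma_1+c_1)$, $\widetilde{\Gamma}_1=1/(\Gamma_1+c_1)$ and simply declares the verification ``trivial,'' whereas you carry out the constraint-by-constraint algebra and the convexity check explicitly. No gaps; your write-up is just a fleshed-out version of the paper's argument.
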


\begin{proof}
By the change of variables, i.e., $\displaystyle \widetilde{\Gamma}_0\triangleq \frac{\Gamma_0+c_0}{\Gamma_1+c_1}$ and $\displaystyle \widetilde{\Gamma}_1\triangleq \frac{1}{\Gamma_1+c_1}$, it is trivial to verify that Problem~\ref{prob:gaussian_approx_scalar} is equivalent to Problem~\ref{prob:gaussian_approx_scalar_change}. Moreover, the objective of Problem~\ref{prob:gaussian_approx_scalar_change} is convex and constraints are also convex, which shows that it is a convex optimization.
\end{proof}

As a result, we can use the existing algorithms, i.e., gradient descent, to obtain the optimal solutions to Problem~\ref{prob:gaussian_approx_scalar_change}, instead of endeavoring to solve the general Problem~\ref{prob:gaussian_approx_scalar}.

\subsection{Results regarding $\nu_0$, $\nu_1$, and $\alpha$}
For the remaining three variables $\nu_0$, $\nu_1$ and $\alpha$, we will show that there exist some good properties for the optimal solutions. It is treated as a parametric optimization problem. Before that, some preliminaries are presented first. We give the following terms, definitions and Lemma~\ref{lem:berge}, mainly based on~\cite{sundaram1996first} and \cite{aubin2009set}.

\begin{definition}
	Let $S$ and $\Psi$ be subsets of $\mathbb{R}^\ell$ and $\mathbb{R}^m$, respectively. A \textit{correspondence} $\mathcal{C}$ from $\Psi$ to $S$ is a map that associates each element $\psi\in\Psi$ with a nonempty subset $\mathcal{C}\left(\psi\right)\subset S$. We denote such a \textit{correspondence} as $\displaystyle \mathcal{C}: \Psi\rightrightarrows S$.
\end{definition}

\begin{definition}
	A \textit{correspondence} $\displaystyle \mathcal{C}: \Psi\rightrightarrows S$ is upper-semicontinuous at $\psi\in\Psi$ if and only if for any open set $V$ such that $\mathcal{C}(\psi)\subset V$, there exists an open set $U$ containing $\psi$, such that for any $\psi'\in U\cap\Psi$, $\mathcal{C}(\psi')\subset V$ holds. It is said to be upper-semicontinuous on $\Psi$ if and only if it is upper-semicontinuous at each $\psi\in\Psi$.
\end{definition}

\begin{definition}
	A \textit{correspondence} $\displaystyle \mathcal{C}: \Psi\rightrightarrows S$ is lower-semicontinuous at $\psi\in\Psi$ if and only if for any open set $V$ such that $V\cap\mathcal{C}(\psi)\neq \emptyset$, there exists an open set $U$ containing $\psi$ such that for any $\psi'\in U\cap\Psi$, $V\cap \mathcal{C}(\psi')\neq\emptyset$ holds. It is said to be lower-semicontinuous on $\Psi$ if and only if it is lower-semicontinuous at each $\psi\in\Psi$.
\end{definition}

\begin{definition}
	A \textit{correspondence} $\displaystyle \mathcal{C}: \Psi\rightrightarrows S$ is continuous on $\Psi$ if and only if $\mathcal{C}$ is both upper-semicontinuous and lower-semicontinuous on $\Psi$.
\end{definition}

\begin{definition}
	A \textit{correspondence} $\displaystyle \mathcal{C}: \Psi\rightrightarrows S$ is said to be
	\begin{enumerate}
		\item compact-valued at $\psi\in\Psi$ if $\mathcal{C}(\psi)$ is a compact set;
		\item convex-valued at $\psi\in\Psi$ if $\mathcal{C}(\psi)$ is a convex set.
	\end{enumerate}
A \textit{correspondence} $\mathcal{C}$ is said to be compact-valued (convex-valued) if it is compact-valued (convex-valued) at each $\psi\in\Psi$.
\end{definition}

\begin{lemma}
	\label{lem:berge}
	(Berge's Maximum Theorem under Convexity) Let $\displaystyle f: S\times \Psi\to\mathbb{R}$ be a continuous function, and $f(\cdot,  \psi)$ is convex in $s\in S$ for each given $\psi\in\Psi$. Let $\displaystyle \mathcal{C}: \Psi\rightrightarrows S$ be a continuous, compact-valued, and convex-valued \textit{correspondence}. Let $\displaystyle f^\star: \Psi\to\mathbb{R}$ and $\displaystyle \mathcal{C}^\star: \Psi\rightrightarrows S$ be defined as:
	\begin{align}
	f^\star(\psi)&\triangleq\min_{s\in S}\left\{f(s,\psi)\mid s\in\mathcal{C}(\psi)\right\},\label{eq:f*}\\
	\mathcal{C}^\star(\psi)&\triangleq \left\{s\in\mathcal{C}(\psi)\mid f(s,\psi)=f^\star(\psi)\right\}.\label{eq:c*}
	\end{align}
	Then $f^\star$ is a continuous function on $\Psi$, and $\mathcal{C}^\star$ is an upper-semicontinuous, compacted-valued, and convex-valued correspondence on $\Psi$.
\end{lemma}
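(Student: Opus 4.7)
The plan is to verify the four conclusions—continuity of $f^\star$, and upper-semicontinuity, compact-valuedness, and convex-valuedness of $\mathcal{C}^\star$—separately, following the standard Berge argument, with convexity of $f(\cdot,\psi)$ invoked only at the convex-valuedness step. First, for each fixed $\psi$, since $\mathcal{C}(\psi)$ is nonempty and compact and $f(\cdot,\psi)$ is continuous, the Weierstrass extreme value theorem ensures $\mathcal{C}^\star(\psi)$ is nonempty, and it is compact as the preimage of $\{f^\star(\psi)\}$ inside the compact set $\mathcal{C}(\psi)$ under a continuous map. Convex-valuedness follows from convexity of $\mathcal{C}(\psi)$ and of $f(\cdot,\psi)$: for $s_1,s_2\in\mathcal{C}^\star(\psi)$ and $\lambda\in[0,1]$, $f(\lambda s_1+(1-\lambda)s_2,\psi)\le \lambda f(s_1,\psi)+(1-\lambda)f(s_2,\psi)=f^\star(\psi)$, and the reverse inequality is automatic since $f^\star(\psi)$ is the infimum.

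For continuity of $f^\star$, I would split into upper- and lower-semicontinuity. For upper-semicontinuity at $\psi$, fix $\varepsilon>0$ and a minimizer $s^*\in\mathcal{C}^\star(\psi)$. Continuity of $f$ provides open neighborhoods of $s^*$ and $\psi$ on which $f$ stays within $\varepsilon$ of $f(s^*,\psi)$, and lower-semicontinuity of $\mathcal{C}$ at $\psi$ then yields a neighborhood of $\psi$ on which $\mathcal{C}(\psi')$ intersects the chosen neighborhood of $s^*$; any such point $s'$ satisfies $f^\star(\psi')\le f(s',\psi')<f^\star(\psi)+\varepsilon$. For lower-semicontinuity, consider $\psi_n\to\psi$ and pick $s_n\in\mathcal{C}^\star(\psi_n)$. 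Upper-semicontinuity of $\mathcal{C}$ combined with compact-valuedness implies $\{s_n\}$ is eventually contained in a compact set, so it admits a convergent subsequence $s_{n_k}\to s$ with $s\in\mathcal{C}(\psi)$; continuity of $f$ then gives $\lim_k f^\star(\psi_{n_k})=f(s,\psi)\ge f^\star(\psi)$, and since every convergent subsequence yields the same bound, $\liminf_n f^\star(\psi_n)\ge f^\star(\psi)$.

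Finally, for upper-semicontinuity of $\mathcal{C}^\star$, I would invoke the sequential criterion for compact-valued correspondences: take $\psi_n\to\psi$ and $s_n\in\mathcal{C}^\star(\psi_n)$, extract a subsequence $s_{n_k}\to s$ via upper-semicontinuity and compact-valuedness of $\mathcal{C}$ at $\psi$, and verify $s\in\mathcal{C}^\star(\psi)$ using the already established continuity of $f^\star$: passing to the limit in $f(s_{n_k},\psi_{n_k})=f^\star(\psi_{n_k})$ gives $f(s,\psi)=f^\star(\psi)$, so $s\in\mathcal{C}^\star(\psi)$. Combined with compact-valuedness of $\mathcal{C}^\star(\psi)$, this yields upper-semicontinuity.

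The main obstacle is organizing the semicontinuity arguments cleanly and being explicit about the subsequence extraction: it is precisely compact-valuedness plus upper-semicontinuity of $\mathcal{C}$ that lets one pull convergent subsequences out of sequences living in $\mathcal{C}(\psi_n)$ for $\psi_n\to\psi$, and without these hypotheses both the lower-semicontinuity of $f^\star$ and the upper-semicontinuity of $\mathcal{C}^\star$ would fail. The convex-valuedness of $\mathcal{C}^\star$ is the only step that actually uses the convexity of $f(\cdot,\psi)$; everything else is the classical Berge Maximum Theorem, for which I would ultimately cite the detailed proofs in \cite{sundaram1996first,aubin2009set} rather than redevelop them.
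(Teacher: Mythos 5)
Your proposal is correct, and in fact does more than the paper, which offers no proof of this lemma at all: it simply remarks that the statement is a variant of Berge's maximum theorem and points to Theorem 9.17 of \cite{sundaram1996first}. Your reconstruction is the standard argument underlying that cited result --- Weierstrass for nonemptiness and compactness of $\mathcal{C}^\star(\psi)$, convexity of $f(\cdot,\psi)$ together with convexity of $\mathcal{C}(\psi)$ for convex-valuedness (correctly identified as the only place the convexity hypothesis enters), lower-semicontinuity of $\mathcal{C}$ for upper-semicontinuity of $f^\star$, and upper-semicontinuity plus compact-valuedness of $\mathcal{C}$ for the subsequence extractions in the remaining two claims. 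The one step you pass over lightly is the assertion that the subsequential limit $s$ of $s_{n_k}\in\mathcal{C}(\psi_{n_k})$ belongs to $\mathcal{C}(\psi)$; this requires the closed-graph property of upper-semicontinuous compact-valued correspondences, which holds here since $S\subset\mathbb{R}^\ell$ and $\Psi\subset\mathbb{R}^m$ are metric, but it deserves a sentence rather than being folded silently into ``extract a subsequence.'' Since you conclude by deferring to \cite{sundaram1996first,aubin2009set} anyway, your treatment is consistent with, and strictly more informative than, what the paper provides.
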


Lemma~\ref{lem:berge} is a variant of the Berge's maximum theorem. One can find the proof from Theorem $9.17$ in~\cite{sundaram1996first}.

Based on the above preliminaries, we denote two variables as $\displaystyle s\triangleq\left[\widetilde{\Gamma}_0~~\widetilde{\Gamma}_1\right]^\top$ and $\displaystyle \psi\triangleq\left[\nu_0~~\nu_1~~\alpha\right]^\top$. A subset $S$ of $\mathbb{R}^2$ is described as $\displaystyle S=\left\{s~\middle\vert~s>0\right\}$, and a subset $\Psi$ of $\mathbb{R}^3$ is described as $\displaystyle \Psi=\left\{\psi~\middle\vert~\alpha\left[\left(\nu_0-\mu_0\right)^2+\left(\nu_1-\mu_1\right)^2\right]\leq\delta,~0<\alpha<1\right\}$. A continuous function $f:~S\times\Psi\to\mathbb{R}$ is defined as:
\begin{small}
\begin{align*}
f(s,~\psi)\triangleq &~\frac{1}{2}\bigg{[}\widetilde{\Gamma}_0-\ln\widetilde{\Gamma}_0-1\\
&~~~+\frac{\left[\left(1-\alpha\right)\mu_1+\alpha\nu_1-\left(1-\alpha\right)\mu_0-\alpha\nu_0\right]^2}{\alpha}\widetilde{\Gamma}_1\bigg{]}.
\end{align*}
\end{small}
For notational convenience, we define the following two functions $d_0,~d_1:~\Psi\to\mathbb{R}$ as:
\begin{align*}
d_0\left(\psi\right)&\triangleq~\frac{\left(1-\alpha\right)\Sigma_0+\alpha\left(1-\alpha\right)\left(\mu_0-\nu_0\right)^2}{\alpha},\\
d_1\left(\psi\right)&\triangleq~\frac{\left(1-\alpha\right)\Sigma_1+\alpha\left(1-\alpha\right)\left(\mu_1-\nu_1\right)^2}{\alpha}.
\end{align*}
A \textit{correspondence} $\displaystyle \mathcal{C}: \Psi\rightrightarrows S$ is defined as:
\begin{small}
\begin{align*}
\mathcal{C}(\psi)\triangleq~\bigg{\{}s~\mid&~~\widetilde{\Gamma}_0\geq\left(\Sigma_0+d_0\left(\psi\right)\right)\widetilde{\Gamma}_1,\\
&~~\widetilde{\Gamma}_0\leq\bigg{[}\frac{\delta}{\alpha}-\left(\nu_0-\mu_0\right)^2-\left(\nu_1-\mu_1\right)^2\\
&~~~~~~~+\Sigma_0+\Sigma_1+d_0(\psi)+d_1(\psi)\bigg{]}\widetilde{\Gamma}_1-1,\\
&~~\widetilde{\Gamma}_1\leq\frac{1}{\Sigma_1+d_1(\psi)}\bigg{\}}.
\end{align*}
\end{small}
Consider the optimization problem:
\begin{problem}
	\label{prob:correspondence}
	\begin{align*}
	\min_{s,~\psi}~~~~&f\left(s,~\psi\right),\\
	\rm s.t.~~~~&s\in\mathcal{C}(\psi),
	\end{align*}
\end{problem}
where $s\in S$ and $\psi\in\Psi$. The definitions of $f^\star$ and $\mathcal{C}^\star$ are consistent with those in~\eqref{eq:f*} and~\eqref{eq:c*}. Obviously, Problem~\ref{prob:correspondence} is derived from the original Problem~\ref{prob:original} via Gaussian approximation.

\begin{theorem}
	\label{thm:berge_illustration}
	In Problem~\ref{prob:correspondence}, $f^\star$ is a continuous function on $\Psi$, and $\mathcal{C}^\star$ is an upper-semicontinuous, compacted-valued, and convex-valued correspondence on $\Psi$.
\end{theorem}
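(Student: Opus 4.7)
The plan is to recognize Problem~\ref{prob:correspondence} as exactly an instance of Berge's Maximum Theorem under Convexity (Lemma~\ref{lem:berge}); the conclusion then follows once its five hypotheses on $f$ and $\mathcal{C}$ are verified in the present setting.

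First, on $S\times\Psi$ we have $\widetilde{\Gamma}_0>0$ and $\alpha\in(0,1)$, so $\ln\widetilde{\Gamma}_0$ is well-defined and $1/\alpha$ is finite; the remaining pieces of $f$ are polynomial in the entries of $s$ and $\psi$, so $f$ is continuous on $S\times\Psi$. For each fixed $\psi$, the map $\widetilde{\Gamma}_0\mapsto\widetilde{\Gamma}_0-\ln\widetilde{\Gamma}_0$ is convex and the $\widetilde{\Gamma}_1$ term is linear, giving convexity of $f(\cdot,\psi)$ in $s$; this was already noted in the proof of Theorem~\ref{thm:equivalent}.

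Second, $\mathcal{C}(\psi)$ is convex since all three defining inequalities are affine in $s$. For compactness, the third constraint bounds $\widetilde{\Gamma}_1$ from above, the first and second sandwich $\widetilde{\Gamma}_0$ between positive multiples of $\widetilde{\Gamma}_1$, and compatibility of those two bounds on $\widetilde{\Gamma}_0$ forces a positive lower bound on $\widetilde{\Gamma}_1$; the required strict positivity of the denominator follows from the standing feasibility condition $\delta\geq\alpha[(\nu_0-\mu_0)^2+(\nu_1-\mu_1)^2]$ together with $\Sigma_1+d_1(\psi)>0$. These bounds also keep $s$ bounded away from the boundary of the positive orthant, so $\mathcal{C}(\psi)$ is closed in $\mathbb{R}^2$ and hence compact.

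The main obstacle is verifying that $\mathcal{C}:\Psi\rightrightarrows S$ is continuous. Since $d_0$ and $d_1$ are continuous on $\Psi$, every coefficient appearing in the constraints varies continuously with $\psi$. Upper-semicontinuity will follow from a standard polyhedral argument: for any open $V\supset\mathcal{C}(\psi)$, compactness of $\mathcal{C}(\psi)$ and uniform continuity of the constraint functions on a bounded neighborhood of $\psi$ yield $\mathcal{C}(\psi')\subset V$ for $\psi'$ sufficiently close to $\psi$. For lower-semicontinuity, the plan is to use a Slater-type argument: whenever the feasibility inequality holds strictly, $\mathcal{C}(\psi)$ has nonempty interior in $S$, so any open $V$ meeting $\mathcal{C}(\psi)$ meets that interior, and this intersection persists under small perturbations of $\psi$ by continuity of the constraint functions; the degenerate boundary case, where the feasible set collapses to a single point depending continuously on $\psi$, is handled by direct inspection. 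Once continuity of $\mathcal{C}$ is in hand, Lemma~\ref{lem:berge} applies verbatim and delivers the continuity of $f^\star$ together with the upper-semicontinuity, compact-valuedness, and convex-valuedness of $\mathcal{C}^\star$ on $\Psi$.
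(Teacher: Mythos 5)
Your proposal is correct and follows essentially the same route as the paper: verify the hypotheses of Lemma~\ref{lem:berge} (continuity and convexity of $f$ in $s$, compact- and convex-valuedness of $\mathcal{C}$, and upper- plus lower-semicontinuity of $\mathcal{C}$ via local uniform boundedness of the feasible sets and a Slater/interior-point perturbation, respectively) and then invoke the lemma. The only substantive differences are presentational --- the paper carries out the semicontinuity steps as explicit sequence-based contradiction arguments with a concrete bounding compact set $M$ and a concrete interior point $\hat{s}$, while you additionally flag the degenerate boundary case $\delta=\alpha\left[\left(\nu_0-\mu_0\right)^2+\left(\nu_1-\mu_1\right)^2\right]$, which the paper passes over.
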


\begin{proof}
	The proof is mainly based on Lemma~\ref{lem:berge}. It is obvious that $f(\cdot,\psi)$, which is the objective in Problem~\ref{prob:gaussian_approx_scalar_change}, is convex in $s$ for each given $\psi$. For the rest part, we need to check the properties of the \textit{correspondence} $\mathcal{C}$.
	
	Compact-valuedness of $\mathcal{C}$ is obvious, since for each $\psi\in\Psi$, $\mathcal{C}(\psi)$ is closed and bounded. Convex-valuedness is also obvious. In the following, we will show that the \textit{correspondence} $\mathcal{C}$ is both upper-semicontinuous and lower-semicontinuous.
	
	(\emph{Upper-semicontinuous}) Let $V$ be an open set such that $\mathcal{C}(\psi)\subset V$. Define an $\epsilon$-neighborhood $\mathcal{B}_\epsilon(\psi)$ of $\psi$ in $\Psi$ by
	\begin{align*}
	\mathcal{B}_\epsilon(\psi)\triangleq \left\{\psi'\in\Psi \mid \lVert\psi'-\psi\rVert<\epsilon\right\}.
	\end{align*}
	We will prove the upper-semicontinuity by contradiction. Suppose that $\mathcal{C}$ is not upper-semicontinuous at $\psi$. Then $\forall \epsilon>0$, $\exists~s'$ such that $s'\in\mathcal{C}(\psi')$ and $s'\notin V$. Choose a sequence $\epsilon(k)\to 0$, and let $\psi(k)\in\mathcal{B}_{\epsilon(k)}(\psi)$, with $s(k)\in\mathcal{C}(\psi(k))$ but $s(k)\notin V$. We will first show that the $\{s(k)\}$ sequence has a convergent subsequence since the sequence lies in a compact set, which is stated by the Bolzano-Weierstrass theorem~\cite{bartle2000introduction}. Since $\psi(k)\to\psi$, we have $\nu_0(k)\to\nu_0$, $\nu_1(k)\to\nu_1$ and $\alpha(k)\to\alpha$. Therefore, there is $k^\star$ such that for all $k\geq k^\star$, we have
	\begin{align*}
	\lvert\nu_0(k)-\nu_0\rvert\leq\eta,~~\lvert\nu_1(k)-\nu_1\rvert\leq\eta,~~\lvert\alpha(k)-\alpha\rvert\leq\eta,
	\end{align*}
	for some small enough positive $\eta$. By some tedious but basic calculations, it follows that for $k\geq k^\star$, we have $s(k)\in M$, where $M$ is the compact set defined by:
	\begin{small}
	\begin{align*}
	 M&\triangleq\bigg{\{} s\in S \mid \widetilde{\Gamma}_0\geq\Sigma_0\widetilde{\Gamma}_1,~~\widetilde{\Gamma}_1\leq\frac{1}{\Sigma_1},\\
	&\widetilde{\Gamma}_0\leq\bigg{(}\frac{\delta}{\alpha-\eta}+\Sigma_0+\Sigma_1+d_0^{\max}(\psi)+d_1^{\max}(\psi)\bigg{)}\widetilde{\Gamma}_1-1\bigg{\}}.
	\end{align*}
	\end{small}
    For brevity, $d_0^{\max}(\psi)$ and $d_1^{\max}(\psi)$ are denoted as:
    \begin{small}
    \begin{align*}
    d_0^{\max}(\psi)\triangleq&\left(\frac{1}{\alpha-\eta}-1\right)\Sigma_0+\left[1-\left(\alpha-\eta\right)\right]\\
    &~~~\times\left[\left(\mu_0-\nu_0\right)^2+\eta^2+2\eta\lvert\mu_0-\nu_0\rvert\right],\\
    d_1^{\max}(\psi)\triangleq&\left(\frac{1}{\alpha-\eta}-1\right)\Sigma_1+\left[1-\left(\alpha-\eta\right)\right]\\
    &~~~\times\left[\left(\mu_1-\nu_1\right)^2+\eta^2+2\eta\lvert\mu_1-\nu_1\rvert\right].
    \end{align*}
    \end{small}
	Therefore, there is a subsequence of $\{s(k)\}$, which we will continue to denote by $\{s(k)\}$ for notation convenience, converging to a limit $\bar{s}$. Moreover, since $s(k)\in\mathcal{C}(\psi(k))$ and $\psi(k)\to\psi$, $s(k)\to\bar{s}$, we also have $\bar{s}\in\mathcal{C}(\psi)$. Because $\mathcal{C}(\psi)\subset V$, $\bar{s}\in V$ is directly obtained. However, $s(k)\notin V$ for any $k$, and $V$ is an open set. Therefore, we also have $\bar{s}\notin V$, which is a contradiction. This validates the upper-semicontinuity of the \textit{correspondence} $\mathcal{C}$.
	
	(\emph{Lower-semicontinuous}) Let $V$ be an open set such that $V\cap\mathcal{C}(\psi)\neq\emptyset$. Let $s$ be a point in this intersection, and therefore $s\in\mathcal{C}(\psi)$. We denote an internal point of the triangle area characterized by $\mathcal{C}(\psi)$ as $\hat{s}$, i.e.,
	\begin{small}
	\begin{align*}
	\hat{s}\triangleq \begin{bmatrix}
	\frac{\Sigma_0+d_0\left(\psi\right)}{\Sigma_1+d_1\left(\psi\right)}\\
	\frac{\frac{\delta}{\alpha}-\left(\nu_0-\mu_0\right)^2-\left(\nu_1-\mu_1\right)^2+2\left(\Sigma_0+\Sigma_1+d_0(\psi)+d_1(\psi)\right)}{2\left(\Sigma_1+d_1(\psi)\right)\left[\frac{\delta}{\alpha}-\left(\nu_0-\mu_0\right)^2-\left(\nu_1-\mu_1\right)^2+\Sigma_0+\Sigma_1+d_0(\psi)+d_1(\psi)\right]}
	\end{bmatrix}.
	\end{align*}
	\end{small}
    Since $V$ is open, $\displaystyle \kappa s+(1-\kappa)\hat{s}\in V$ for $\kappa<1$, $\kappa$ close to $1$. Let $\tilde{s}\triangleq \kappa s+(1-\kappa)\hat{s}$, and then $\tilde{s}\in\mathcal{C}(\psi)$. We will show the lower-semicontinuity by contradiction. Suppose that $\mathcal{C}(\psi')\cap V=\emptyset$ holds for all $\psi'$ in any neighborhood of $\psi$. Take a sequence $\epsilon(k)\to 0$, and pick $\psi(k)\in\mathcal{B}_{\epsilon(k)}(\psi)$ such that $\mathcal{C}(
    \psi(k))\cap V=\emptyset$. Since $\mathcal{C}(\psi(k))\to\mathcal{C}(\psi)$, for $k$ sufficient large, $\tilde{s}\in\mathcal{C}(\psi(k))$. It implies $\tilde{s}\notin V$, which is a contradiction. 
    
    After proving that the \textit{correspondence} $\mathcal{C}$ is continuous, compact-valued and convex-valued, we conclude that $f^\star$ is continuous and $\mathcal{C}^\star$ is upper-semicontinuous, compacted-valued and convex-valued according to Lemma~\ref{lem:berge}.
\end{proof}

\begin{figure}[t]
	\centering
	\includegraphics[width=0.48\textwidth]{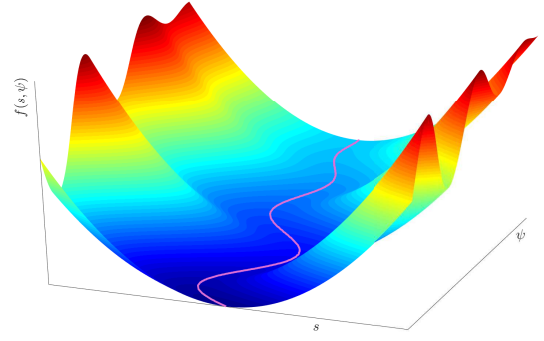}
	\caption{$f^\star(\psi)$ continuous at $\psi$}
	\label{fig:berge_theorem_illustration}
\end{figure}

\begin{remark}
	Theorem~\ref{thm:berge_illustration} states that $f^\star$ is continuous at each $\psi\in\Psi$. Fig.~\ref{fig:berge_theorem_illustration} illustrates the case when $s$ and $\psi$ are scalars. $f^\star$ is represented by the pink curve, which is like ``a winding stream running through high mountains''. It means that for each fixed $\psi$, $f^\star(\psi)$ is the minimum which can be found with respect to $s$. Moreover, the global minimum of $f(s,\psi)$ is on this pink curve. We only need to search along this continuous curve, and we will find the optimal attack strategy for this Byzantine attack optimization problem.   
\end{remark}

\subsection{Coordinate descent algorithm}
In the last subsection, we have proved that $f^\star$ is continuous at $\psi$, where $\displaystyle \psi=\left[\nu_0~~\nu_1~~\alpha\right]^\top$. With the Gaussian approximation method, the minimum of Problem~\ref{prob:original} then can be searched along $f^\star$ by numerical algorithms. Since we have only proved the existence of continuity for $f^\star$, other properties, i.e., differentiability and twice differentiability, are not guaranteed. Based only on the continuity, we propose Algorithm~\ref{alg:optimal-attack} to search the optimal Byzantine attack strategy for Problem~\ref{prob:correspondence}. The $\texttt{cvx}$ toolbox mentioned is a MATLAB-based modeling system for convex optimization. 
\begin{algorithm}[t]
	\small
	\caption{Coordinate Descent Algorithm for Optimal Byzantine Attack Strategy}
	\label{alg:optimal-attack}
	\begin{algorithmic}[1]
		\State Input: $T, \{a_k\}, \{b_k\}, \{c_k\}$
		\State Initialization: $\nu_0$, $\nu_1$, $\alpha\in(0,1)$;
		\State \texttt{cvx} Toolbox: compute $f^\star\left(\left[\nu_0~~\nu_1~~\alpha\right]^\top\right)$;
        \For{$k=1:1:T$}
        \State $\nu_{0}^-=\nu_0-a_k$;
        \State $\nu_{0}^+=\nu_0+a_k$;
        \State \texttt{cvx} Toolbox: compute $f^\star\left(\left[\nu_{0}^-~\nu_1~\alpha\right]^\top\right)$;
        \State \texttt{cvx} Toolbox: compute $f^\star\left(\left[\nu_{0}^+~\nu_1~\alpha\right]^\top\right)$;
        \If{$f^\star\left(\left[\nu_{0}^-~\nu_1~\alpha\right]^\top\right)\leq f^\star\left(\left[\nu_{0}^+~\nu_1~\alpha\right]^\top\right)$}
        \State $\nu_0\gets\nu_0^-$;~~$\text{flag}=-1$;
        \Else 
        \State $\nu_0\gets\nu_0^+$;~~$\text{flag}=1$;
        \EndIf
        \Repeat
        \State $\nu_0\gets\nu_0+\text{flag}\times a_k$;
        \State \texttt{cvx} Toolbox: compute $f^\star\left(\left[\nu_0~~\nu_1~~\alpha\right]^\top\right)$;
        \Until{$f^\star\left(\left[\nu_0~~\nu_1~~\alpha\right]^\top\right)$} does not descend;
        \State do Step $5$ ---- Step $17$ for $\nu_1$ and $\alpha$ with searching step lengths $b_k$ and $c_k$, respectively;
        \If{$f^\star\left(\left[\nu_0~~\nu_1~~\alpha\right]^\top\right)$ converges w.r.t. iteration $k$}
        \State break;
        \EndIf
        \EndFor		
	\end{algorithmic}
\end{algorithm} 

\section{Numerical Results}\label{sec:numerical}
In this section, we provide some numerical examples to illustrate the main results. We consider a scenario where the original probability measures $f_0$ and $f_1$ are distributed as:
\begin{align*}
\mathcal{H}_0:~~&f_0\sim \mathscr{N}\left(\mu_0=2,~\Sigma_0=2.8\right),\\
\mathcal{H}_1:~~&f_1\sim \mathscr{N}\left(\mu_1=10,~\Sigma_1=3.1\right).
\end{align*}
As shown in the first sub-figure in Fig.~\ref{fig:kl_divergence}, with the Gaussian approximation method, the KL divergence can be minimized by using the proposed coordinate descent algorithm when power constraint $\delta=80$. After $T=200$ iterations, a feasible attack solution is obtained as $\nu_0=11.9985$, $\nu_1=0.3385$, $\alpha=0.4069$, $\Gamma_0=2.8218$, $\Gamma_1=6.3137$, and a resulting KL divergence very close to $0$. This attack strategy is derived with the Gaussian approximation of the KL divergence objective. The real probability measures and the KL divergence between two Gaussian mixture models are portrayed in Fig.~\ref{fig:pdf}. It can be seen that the original KL divergence is $10.3251$ without Byzantine attack. By Monte Carlo sampling, which is introduced in Section~\ref{Monte_Carlo} with the sample size $K=100000$, the KL divergence under Byzantine attack is computed to be $0.8792$. The decrease of the KL divergence implies a tremendous increase of the missed detection error probability in the hypothesis testing as follows. Without the Byzantine attack, the false alarm probability $P_{\text{FA}}$ and the missed detection error probability $P_{\text{M}}$ under the Neyman-Pearson setup almost can be zero based on i.i.d. measurements from $10$ sensors. On the other hand, the designed Byzantine attack increases the missed detection error probability to $P_{\text{M}}^a=10.33\%$ while keeping the false alarm probability under $P_{\text{FA}}^a=0.04\%$.

The second sub-figure in Fig.~\ref{fig:kl_divergence} shows the approximated KL divergence curve with respect to the \textit{attacking power} $\alpha$ when constraint level $\delta=20$. For each fixed $\alpha$, we compute the KL divergence by using coordinate descent algorithm. We find that a larger \textit{attacking power} leads to a smaller KL divergence, which means a larger missed detection error probability. Notice that the KL divergence is still greater than $0$ even when $\alpha\geq 0.5$. This is because the Byzantine attack is launched by injecting noises instead of directly tampering measurements and it is conducted under an energy constraint. 

\begin{figure}[t]
	\centering
	\includegraphics[width=0.5\textwidth]{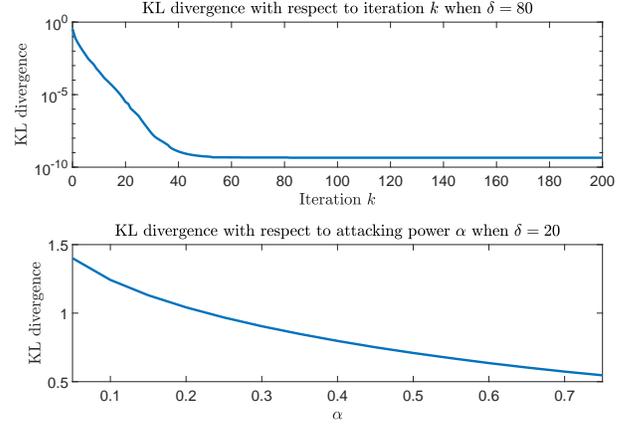}
	\caption{KL divergence w.r.t. iteration and \textit{attacking power}}
	\label{fig:kl_divergence}
\end{figure}

\begin{figure}[t]
	\centering
	\includegraphics[width=0.5\textwidth]{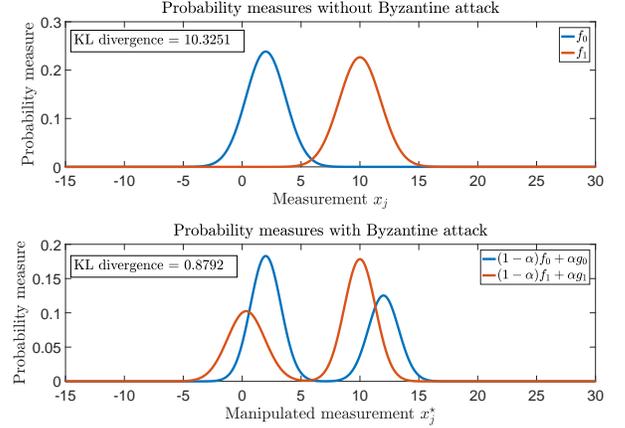}
	\caption{Probability measures without and with attack}
	\label{fig:pdf}
\end{figure}

\section{Conclusions}\label{sec:conclusion}
In this paper, a binary hypothesis testing is conducted based on measurements from a number of identical sensors, some of which may be compromised by a Byzantine attacker with probability $\alpha$. The attacker manipulates the measurements by injecting independent noises under the power constraint. We first formulated this attack optimization problem by using KL divergence to evaluate the attack impact. We then investigated the optimization problem with Gaussian approximation method and derived some theoretic results regarding the optimal attack strategy. In addition, a coordinate descent algorithm based on the theoretic results was proposed to search the optimal solution. Numerical examples verified the main results and showed the attack impact for the original problem, which is difficult to solve directly. Investigating this problem in vector case and with other approximation methods is a future direction.

\end{document}